\newcommand{\dis}{{\sc DDP}}
\begin{document}
	
	\sloppy
	\title{Approximation Algorithms for Drone Delivery Packing Problem}
	
	\author{Saswata Jana}
	\affiliation{%
		\institution{Indian Institute of Technology Guwahati}
		\city{Guwahati, India} 
	}
	\email{saswatajana@iitg.ac.in}

	\author{Partha Sarathi Mandal}
	\affiliation{%
		\institution{Indian Institute of Technology Guwahati}
		\city{Guwahati, India} 
	}
	\email{psm@iitg.ac.in}

	\begin{abstract}
		Recent advancements in unmanned aerial vehicles, also known as drones, have motivated logistics to use drones for multiple operations. Collaboration between drones and trucks in a last-mile delivery system has numerous benefits and reduces a number of challenges. In this paper, we introduce  \textit{drone-delivery packing problem} (DDP), where we have a set of deliveries and respective customers with their prescribed locations, delivery time intervals, associated cost for deliveries, and set of drones with identical battery budgets. The objective of the DDP is to find an assignment for all deliveries to the drones by using the minimum number of drones subject to the battery budget and compatibility of the assignment of each drone.
		We prove that DDP is NP-Hard and formulate the integer linear programming (ILP) formulation for it. 
		We proposed two greedy approximation algorithms for DDP. The first algorithm uses at most $2OPT + (\Delta + 1)$ drones. The second algorithm uses at most $2OPT + \omega$ drones, where OPT is the optimum solution for DDP, $\omega$ is the maximum clique size, and $\Delta$ is the maximum degree of the interval graph $G$ constructed from the delivery time intervals. 
	\end{abstract}
	
	
	\ccsdesc[500]{Mathematics of computing~Combinatorial optimization}
	\ccsdesc[500]{Computing methodologies~Optimization algorithms}
	\ccsdesc[500]{Theory of computation~Scheduling algorithms}
	
	\keywords{ Approximation Algorithm; Drone Delivery;  Truck; Last-mile Delivery System; Time complexity}
	
	\maketitle
	
	\fancyhead{}
	
	\section{Introduction}

	

	The rapid demand for commercial deliveries motivates the e-commerce giants to deliver more effectively to customers.
	This delivery journey \textit{last-mile delivery} \cite{LastMile}, is the most expensive and time-consuming process, where the product needs to deliver to the customer's doorstep from the distribution hub. This delivery requires a lot of human interactions. But, advances in drone technologies make a miniature, and this delivery system brings more impact in today's pandemic world. Big companies started preparing to make productive parcel deliveries through drones \cite{amazon}. In addition, using key vehicles (like trucks, cars, vans, etc.) along with drones with their constraints enhances the profit and diminishes the total delivery time interval.
	Further drones have enormous application in the field of defence and disaster response \cite{defence}, agriculture \cite{agriculture}, healthcare \cite{healthcare}, etc.
	
	\vspace{1mm}
	\noindent \textit{Challenges:}
	Delivery of the packages by drones in association with a truck creates many challenges in our real-world scenario. For a given location of customers, we need to know the optimum route for the truck, the launching and rendezvous points of the drones, and the truck for making multiple deliveries. Also, the limited battery budget of the available drones in the market does not allow us to make a desirable number of deliveries. To guarantee all the deliveries for a given delivery time with a fixed number of drones, we need to either create rechargeable stations or a battery replacement policy for the drones, which makes the system complicated.  
	In addition to this, we can't use a drone for any set of deliveries altogether because of conflicts among delivery time intervals. 
	At a time, a drone can deliver at most one package.
	All these complexities influence logistics to complete all the deliveries by using a few identical drones so that the total delivery cost by the company is optimized.

	\vspace{1mm}
	\noindent \textit{Drone delivery
		scheduling problem:}
	This problem considers last-mile delivery to customers 
	using a set of drones carried by a truck moving in a prescribed route. For a given set of deliveries and their delivery time intervals, reward or profit for each delivery, and battery budget of the drones, the goal is to schedule a given/ fixed set of drones for the deliveries to make total profit maximum. This problem was introduced by Sorbelli et al. \cite{Betti_ICDCN22} and proved that the problem is NP-Hard and proposed heuristic algorithms to solve the problems. 
	
	The aforementioned problem does not guarantee delivery of all deliveries because it uses a fixed set of drones to optimize total profit for delivery.
	In this paper, we introduce the Drone-Delivery Packing Problem (DDP), which aims to optimize the number of drones to deliver all deliveries. We propose two approximation algorithms to solve DDP.

	\vspace{1mm}
	\noindent{\bf Contributions.}
	In this paper our contributions are following.
	\begin{itemize}
		\item In this paper, we introduce Drone Delivery Packing Problem (DDP) and prove that it is NP-Hard. We present an integer linear programming formulation for it, which is only suitable for solving the problem optimally for small-sized instances.
		\item We propose an approximation algorithm for DDP with running time $\mathcal{O}(n\log n + n_e)$ and uses at most $2OPT + (\Delta + 1)$ drones, where $n$ is the number of deliveries, $n_e$ is the total number of edges and $\Delta$ is the maximum degree of the interval graph $G$.
		\item We propose another approximation algorithm for DDP with running time $\mathcal{O}(n\log n + n_e)$ and uses at most $2OPT + \omega$ drones, where $n$ is the number of deliveries, $n_e$ is the total number of edges and $\omega$ is the maximum clique size of the interval graph $G$.
	\end{itemize}

	\vspace{1mm}
	\noindent{\bf Related Work.}
	Since the drones have certain mobility, using the truck makes more efficient deliveries. Various researches have been done in this area of delivery with collaboration between drones and a truck.

	This kind of delivery comes into account when \textit{Murray and Chu} in \cite{MURRAY201586} introduced \textit{flying sidekicks traveling salesman problem}, a more extension of TSP, where customers need to visit (or deliver) either by the truck or by a drone starting from the depot. At the same time, the drone begins its journey either from the depot or any customer location, and the same for the meeting occasion. Here authors want to minimize the total makespan to make all the deliveries. For this purpose, they proposed an optimal \textit{mixed integer linear programming} (MILP) formulation and two effective heuristic solutions.
	
	\textit{Crisan and Nechita} \cite{CRISAN201938} proposed another effective heuristic for \textit{flying sidekicks traveling salesman problem} by using the solution for TSP. \textit{Murray and Raj} \cite{Murray2019TheMF} extended \textit{flying sidekicks traveling salesman problem} for multiple drones. Here they proposed MILP formulation for the problem and then a heuristic solution with numerical testings. 
	\textit{Daknama and Kraus} \cite{Daknama2017VehicleRW} take in the hand of mobility of drones. There is a rechargeable area on the truck's roof where the drones can charge after completing one delivery and go for the next delivery. Here authors proposed a heuristic algorithm for scheduling of truck and drones.
	Delivery by drones only in the attention in \cite{Boysen2018DroneDF}, where \textit{Boysen et al.} objective to find the launch and meet point for delivery with the truck so that total makespan for completing all the deliveries minimized. The assumption for this problem was the knowledge of the truck's route but without any battery constraint of the drones. 
	Mathew et al. \cite{HDP} proposed a \textit{heterogeneous delivery problem}, where two co-operative vehicles (truck and \textit{micro-aerial vehicles} (MAV)) are used for performing all the deliveries. Whereas drone can fly from and meet with the truck at the prescribed warehouses. Their goal is to find the optimal route with respect to the cost. For this, they proposed two heuristic solutions and the hardness of the problem.
	
	Very recently, \textit{Sorbelli et al.} \cite{Betti_ICDCN22} proposed a
	Multiple Drone-Delivery Scheduling
	Problem (MDSP), where a truck and multiple drones cooperate among them-self for package delivery in a last-mile.
	The paper gave NP-hardness proof of the problem, ILP formulation, and design a heuristic algorithm for the single drone case and two heuristic algorithms for the multiple drones case.
	
	The problem we discuss here is a more generalized version of bin-packing problem. \textit{Coffman et al.} in \cite{bin_survey} presents several variants of bin-packing prolem with their approximation algorithm. \textit{Stacho} in \cite{chordal} described various colouring of chordal graph along with their complexities.

	\vspace{1mm}
	\noindent{\bf Roadmap.} We discuss 
	model and preliminaries with problem definition in Section \ref{section:model}. 
	We present hardness of the Drone-Delivery Packing Problem and ILP formulation of it in Section \ref{section:hardness}. 
	We discuss propose two approximations in Section \ref{section:aproxAlgos} 
	Finally, we conclude in Section \ref{section:concl}.  
	
	\section{Model and Preliminaries}
	\label{section:model}
	
	\noindent \textbf{Model:} Let $\mathcal{N} = \{1, 2, \cdots, n\}$ be the set of \textit{deliveries} to be delivered to the respective customers with their prescribed location at $\delta_j$ for each $j \in \mathcal{N}$. A delivery company wants to deliver the packages to the corresponding customers  by the drones having identical battery capacity or budget $B$. Let $c_j$ $(0 < c_j \leq B)$ be the energy cost for a drone to complete the delivery $j \in \mathcal{N}$. Initially, all the drones are at the company's warehouse (depot). Now, a truck will leave the depot with all the drones in its pre-decided path. For making a delivery $j$ at position $\delta_j$, a drone will fly from the truck at a specific \textit{launching location} $(\delta_j^L)$ and after completing the delivery at $\delta_j$ it meets with the truck again at a specific \textit{rendezvous location} $(\delta_j^R)$. After completing all the deliveries, the truck with the drones returns to the depot.
	
	Let at time $t_0$ the truck start its journey with all the drones. 
	For each delivery $j$, at the time $t_j^L$ and $t_j^R$ one drone comes at the position $\delta_j^L$ and $\delta_j^R$, respectively. Then, $I_j = [t_j^L, t_j^R]$ be the \textit{delivery time interval} for the delivery $j \in \mathcal{N}$. Let $I = \{I_1, I_2, \cdots, I_n\}$ be the \textit{delivery time interval set} for the set of deliveries $\mathcal{N}$. Also, emphasize that the truck will only move in one direction. So, if $A$ and $B$ are any two points on the truck's pre-decided path, where $B$ is the later point of $A$, then $t_A < t_B$, where $t_A$ and $t_B$ are the time when the truck arrives at the position $A$ and $B$, respectively.
	
	Any drone can be assigned for multiple deliveries $S$ ($S \subseteq I)$ constraints to the battery budget $B$ of the drone and the compatibility of the delivery time intervals. Any two delivery time intervals $I_j$ and $I_k$ are said to be \textit{compatible} or \textit{conflict-free} if $I_j \cap I_k = \phi$, otherwise they are in \textit{conflict}. Any set of delivery time intervals $S \subseteq I$ is said to be \textit{compatible} if all pairs of intervals in it are compatible. A compatible set $S \subseteq I$ is said to be feasible if $\sum_{I_j \in S} c_j \leq B$. A feasible set $S \subseteq I$ can be assigned or packed to a drone. If a feasible set $S \subseteq I$ is assigned for the  drone $i$ then, we call $S$ as an \textit{assignment} of the drone $i$. We are saying a drone is in \textit{used} or \textit{opened} if there exists at least one delivery which is assigned to this drone.
	
	\vspace{1mm}
	\noindent{\bf Graph.} For the given set of delivery time intervals $I$, we can construct an interval graph $G=(V,E)$, where the vertices represent the intervals and two vertices are adjacent if the corresponding two intervals conflict.  $\Delta$ is denoted as the maximum degree of $G$, $|V| = |I|=n$ and $|E|=n_e$.
	Now $G =(V,E)$ being an interval graph, it is \textit{perfect} \cite{west}. Therefore, $\chi(G)$ = $\omega(G)$, where $\chi(G)$ is the \textit{chromatic number} and $\omega(G)$ is the \textit{maximum clique size} of $G$.
	We use $\omega$ instead of $\omega(G)$ for simplicity. So, the vertices of $G$ can be optimally coloured by using $w$ many colours in polynomial time (linear in terms of number of vertices and edges) \cite{chordal} such that no two adjacent vertices get the same colour. Since every interval in $I$ represented by a unique vertex in $G$, each interval can be coloured linearly from the colour of $G$.  Further, we can partition $I$ into $\omega$ many conflict-free (i.e., compatible) sets corresponding to each colour. Let $J_1, J_2, \cdots, J_{\omega}$ be the partitions. Therefore, $\cup_{k=1}^{\omega} J_k = I$.
	

	\vspace{1mm}
	\noindent{\bf Drone-Delivery Packing Problem (DDP).} {\dis} is formally defined as follows.
	\begin{definition} [{\dis}] \label{Problem 1}
		{\em 
			Given a set of delivery time intervals $I = \{I_1, I_2, \cdots, I_n\}$ corresponding to the set of deliveries $\mathcal{N}$ associated with cost $c_j$ for each $j \in \mathcal{N}$, the objective for DDP is to use minimum number of drones such that each delivery is completed by exactly one drone and each assignment of the opened drones is feasible.}
	\end{definition}
	In another way, find a smallest cardinality set of drones $M^* = \{1, 2, \cdots, m^*\}$ along with a family of assignments $\mathcal{S^*} = \{S_1^*, S_2^*, \cdots, S_{m^*}^*\}$, where $S^*_i \subseteq I$ is a feasible assignment for the drone $i \in M^*$ such that each delivery is associated with a unique element (assignment) in $\mathcal{S}^*$ and $S^*_l \cap S^*_m = \phi$; $\forall  1\leq l \neq m \leq m^*$.
	
	\section{Problem Hardness} \label{section:hardness}
	Here we establish the hardness of the DDP via the \textit{bin packing problem} \cite{Coffman1984}.
	\begin{theorem}
		\label{Np-hard}
		DDP is an NP-hard problem.
	\end{theorem}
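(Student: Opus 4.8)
The plan is to prove NP-hardness by a polynomial-time reduction from the classical bin packing problem, which is known to be NP-hard \cite{Coffman1984}. Recall that an instance of bin packing consists of item sizes $a_1, a_2, \ldots, a_n$ together with a bin capacity $B$ (with $0 < a_j \leq B$), and asks for the minimum number of capacity-$B$ bins needed to pack all items so that the total size placed in each bin does not exceed $B$. The key observation I would exploit is that \dis\ coincides with bin packing exactly when the interval-compatibility constraint is rendered vacuous, so the entire reduction amounts to encoding the item sizes as delivery costs while arranging the time intervals so that compatibility is automatic.

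Concretely, given any bin packing instance I would construct a \dis\ instance as follows: introduce one delivery $j \in \mathcal{N}$ for each item and set its energy cost to $c_j = a_j$ (which indeed satisfies $0 < c_j \leq B$), and choose the delivery time intervals to be pairwise disjoint, for example $I_j = [2j, 2j+1]$ for $j = 1, \ldots, n$. This construction is clearly computable in time polynomial in the size of the bin packing input.

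Next I would verify the correspondence between the solutions of the two instances. Since the intervals $I_1, \ldots, I_n$ are pairwise disjoint, every subset $S \subseteq I$ is automatically compatible, so by the definition of feasibility a set $S$ is feasible for \dis\ if and only if $\sum_{I_j \in S} c_j = \sum_{j \in S} a_j \leq B$, which is precisely the bin packing capacity condition. Hence an assignment of deliveries to drones in the constructed \dis\ instance is identical to a packing of items into bins, and the minimum number of drones equals the minimum number of bins. An optimal \dis\ solution therefore yields an optimal bin packing solution and conversely, and the corresponding decision versions are equivalent under the reduction.

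I expect the main point to argue carefully is not any computational difficulty but the clean equivalence of the two optima: one must check both directions, namely that any bin packing solution maps to a valid family of drone assignments (immediate, since costs equal item sizes) and, conversely, that any feasible \dis\ assignment on the constructed instance is a valid bin packing (which holds because the compatibility requirement is satisfied for free when the intervals are disjoint). Since bin packing is NP-hard and this reduction is both polynomial-time and value-preserving, it follows that \dis\ is NP-hard as well.
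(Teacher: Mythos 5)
Your proposal is correct and follows essentially the same route as the paper: a polynomial-time reduction from bin packing that encodes item sizes as delivery costs and uses the pairwise-disjoint intervals $I_j = [2j, 2j+1]$ to make compatibility vacuous, so that drone assignments correspond exactly to bin packings. The paper's own proof is identical in substance, including the same interval construction and the same value-preserving equivalence of optima.
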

	\begin{proof}
		We prove it by the reduction from bin-packing problem (BP).
		
		The goal of the bin-packing problem is to pack given a set of items associated with some sizes into a set of bins with identical capacity such that the number of bins used for the packing is minimum.
		
		Let $\mathcal{I}_{BP} = (\mathcal{N}_{BP}, s, b)$ be an arbitrary instance of BP, where $\mathcal{N}_{BP} = \{1, 2, \cdots,$ $ n\}$ is the set of items, $b$ is the capacity of each identical bin and  $s \colon \mathcal{N}_{BP} \to (0,b]$ is the size function of the items with $s(j) = s_j, \forall j \in \mathcal{N}_{BP}$.
		
		The above instance $(\mathcal{I}_{BP})$ for BP can be transformed into an instance of DDP as follows. Let $\mathcal{I} = (\mathcal{N}, I, c, b)$, where $\mathcal{N} = \mathcal{N}_{BP}$ is the set of deliveries, $I = \{I_1, I_2, \cdots, I_n\}$ is the set of delivery time intervals with $I_j = [2j, 2j+1]$ being the delivery time interval of the delivery $j$ $(1 \leq j \leq n)$, $B = b$ is the battery budget of identical drones and $c \colon \mathcal{N} \to (0,B]$ is the cost function of the deliveries with $c(j) = c_j = s_j, \forall j \in \mathcal{N}$. $\mathcal{I}$ is an instance of DDP and the reduction from $\mathcal{I}_{BP}$ is polynomial.
		
		All the delivery time intervals in $I$ are pairwise compatible. So, if all the items in $\mathcal{N}_{BP}$ can be packed using $m$ bins, then all the deliveries in $\mathcal{N}$ can be scheduled using $m$ drones and vice-versa. Thus, BP $\leq_P$ DDP. Hence, being BP an NP-hard problem \cite{Np}, DDP is so.
	\end{proof}
	\subsection{ILP Formulation}
	For solving DDP optimally, we can formulate the problem via \textit{integer linear programming} (ILP) as follows. According to the assumption, the associated cost $c_j$ for each delivery $j \in \mathcal{N}$ does not exceed the battery budget of the drones. So, $n = |\mathcal{N}|$ is an upper bound for the optimal solution of the DDP. Let $\mathcal{M} = \{1, 2, \cdots, n\}$ be the set of drones available in the company's warehouse.
	Let,
	\begin{equation}
		\begin{split}
			y_{i} & = 1 \text{, if drone $i \in \mathcal{M}$ is used} \\
			& = 0 \text{, otherwise}.
		\end{split}
	\end{equation}
	\begin{equation}
		\begin{split}
			x_{ij} & = 1 \text{, if delivery $j$ $\in$ $\mathcal{N}$ completed by the drone $i$ $\in$ $\mathcal{M}$} \\
			& = 0 \text{, otherwise}.
		\end{split}
	\end{equation}
	\begin{equation}
		\min \sum_{i \in \mathcal{M}}^{} y_i \label{eq:2}
	\end{equation}
	\begin{equation}
		\text{subject to}	\sum_{j \in \mathcal{N}}^{}  c_{j} x_{ij} \leq B.y_i , \text{    \hspace{85pt}     $\forall$ $i$ $\in$ $\mathcal{M}$} \label{eq:3}
	\end{equation}
	\begin{equation}
		\sum_{i \in \mathcal{M}}^{} x_{ij} = 1, \text{ \hspace{80pt}   $\forall$ j $\in$ $\mathcal{N}$} \label{eq:4}
	\end{equation}
	\begin{equation}
		x_{ij} + x_{ik} \leq 1, \text{ \hspace{20pt}  $\forall i \in \mathcal{M}$; $\forall j, k \in \mathcal{N}$ such that $I_{j} \cap I_{k} \neq \phi$} \label{eq:5}
	\end{equation}
	\begin{equation}
		y_i, x_{ij} \in \{0,1\}, \text{ \hspace{80pt}    $\forall i \in \mathcal{M}$, $\forall j \in \mathcal{N}$}
	\end{equation}
	The objective function (\ref{eq:2}) is about minimizing the number of used drones. Constraint (\ref{eq:3}) depicts that a used drone cannot have an assignment which exceeds it's battery budget $B$. Constraint (\ref{eq:4}) says that a customer can be delivered by exactly one drone. Constraint (\ref{eq:5}) tells that if two deliveries are in conflict, a drone can choose at most one of them.
	
	The aforementioned formulation is only suitable
	for solving the problem optimally for small-sized instances.

	\section{Approximation Algorithms} \label{section:aproxAlgos}
	In this section, we design two approximation algorithms to solve the  DDP. First, we propose a greedy approximation algorithm (\textsc{GreedyAlgoForDDP}) and then we propose another approximation algorithm using colouring (\textsc{ApprxAlgoWithColouring}) that uses \textsc{GreedyAlgoModified} as a subroutine.  
	\textsc{GreedyAlgoModified} is 
	a modified version of \textsc{GreedyAlgoForDDP}.
	
	\subsection{Greedy Approximation Algorithm for DDP}
	Algorithm \ref{GreedyAlgo} takes a simple greedy approach like First-Fit bin packing algorithm \cite{Coffman1984}.  At first, the algorithm sorts the deliveries according to their launching time, takes the delivery with the shortest launch time, and assigns it to any of the drones. The assigned drone is termed as \textit{used} or \textit{opened} drone. Then it takes the delivery one by one as per the sorted order and tries to assign them to the used drones. 
	A new drone is introduced if any delivery does not fit, i.e., not feasible with any of the used drones.
	
	For getting a better time complexity of the above greedy approach, we are using a balanced binary search tree with each node corresponding to a used drone, and the key of that node is the remaining battery capacity of the used drone. Furthermore, for each node, we store an additional data corresponding to the maximum time among all rendezvous times of the associated drone's assigned deliveries. All other attributes of the node are specified below. With the help of this tree, we can efficiently either find the index of the used drone in which we can assign a particular delivery (in addition to the existing assignment) or get the confirmation to introduce a new drone. The pseudocode of this greedy approach with the tree depicts in Algorithm \ref{GreedyAlgo}.
	
	\vspace{1mm}
	\noindent \textbf{Variable Specification:} $T$: A balanced binary search tree $T$ is represented by \textit{root} node (initially $root$ = NULL).
	\textit{node/ NODE}: Tree nodes with five attributes. For $node$ we use \textit{node.index}: index of the corresponding drone; \textit{node.key}: remaining capacity of the drone; \textit{node.data}: maximum among the rendezvous time of all assigned deliveries to the drone; \textit{node.left}: left child of the node in the tree; \textit{node.right}: right child of the node in the tree. Similarly, we can consider the attributes; \textit{NODE.index}, \textit{NODE.key}, \textit{NODE.data},  \textit{NODE.left}, \textit{NODE.right} for $NODE$. 
	The tree $T$ maintains it's balance and search property according to node's \textit{key} attribute.
	
	\vspace{1mm}
	\noindent \textbf{Global variable:} $d_j = N(j) + 1$: Maximum number of checking need to be done in the tree for the delivery $j$ to maintain compatiblity property of an assignment, whereas $N(j)$ is the number of conflicts for the delivery $j$, $\forall j: 1 \leq j \leq n$.
	
	\noindent \textit{Description of Algorithm \ref{GreedyAlgo}:} Initially, no drone is opened or used (i.e., $m = 0$), where $m$ represents the number of used drones, and the tree $T$ is empty, represented by $root$ = NULL. 
	As one node represents one drone, in the remaining part of this paper, we will use `assign to a drone' and `assign to a node' as a synonym. While assigning a delivery to a node means assigning that delivery to the drone corresponds to that node. A node is called \textit{feasible} for the delivery $j$, if the node's existing assignment with $I_j$ is a feasible assignment. For assigning a delivery $j$ according to the non-decreasing launching time, the algorithm calls the function \textsc{Find}($root, j$) to find the node where we can assign the delivery. If \textsc{Find} returns a non-null pointer of a node ($node$), then we assign the delivery $j$ at $node.index$ and call \textsc{Update}(). In this function we decrease the $node.key$ by ($node.key - c_j$) and update \textit{node.data} by the maximum of previous $node.data$ and $t_j^R$.  Decreasing the $key$ of a node may violate the search property of the balanced binary search tree. Then, we delete the $node$ first and then insert a new node with this same attributes of the deleted node.
	
	Otherwise (\textsl{Find} returns a null pointer), a new node ($node$) is introduced with index $(m + 1)$ and assign the delivery $j$. Then call \textsc{Insert}() for inserting a new node in the balanced binary search tree with $node.key$ as $(B-c_j)$ and $node.data$ as $t_j^R$.
	
	\begin{algorithm}[]
		\caption{\textsc{GreedyAlgoForDDP}($I, c, B$)}\label{GreedyAlgo}
		\DontPrintSemicolon
		\textbf{Initialize:} $m = 0;  S_i = \phi, \forall i: 1\leq i \leq n$; $root$ = NULL\; 
		Sort the intervals according to their launching time. (without loss of generality let $t_1^L \leq t_2^L \cdots, \leq t_n^L)$\;
		\For{$j \gets 1 , n$}{
			$node$ = \textsc{Find}($root, j$)\;
			\uIf{$node \neq$ NULL}{
				$i \gets$ $node.index$; $S_i \gets S_i \cup \{I_j\}$\;
				$DATA$ $\gets \max\{t_j^R,$ $node.data$\}\;
				
				\textsc{Update}($root$, $node$, $node.key$ - $c_j$, $DATA$)\;
			}
			\Else{
				$m \gets m+1$\;
				$S_m \gets \{I_j\}$\;
				\textsc{Insert}($root$, $m, B-c_j, t_j^R$)\;
			}
		}
		\textbf{return} $m$,  along with the assignments $\{S_1, S_2, \cdots, S_m\}$\;
	\end{algorithm}
	
	\begin{algorithm}[]
		\caption{\textsc{Find}($node, j$)}\label{Find}
		\DontPrintSemicolon
		\uIf{$node$ = NULL}{
			\textbf{return} NULL\;
		}
		\Else{
			$NODE$ = \textsc{Find}($node.right, j$)\;
			\If{$NODE$ $\neq$ NULL}{
				\textbf{return} $NODE$\;
			}
			\uIf{\textsc{Check}($node, j$)}{
				\textbf{return} $node$\;
			}
			\ElseIf{$d_j = 0$}{
				\textbf{return} NULL\;
			}
			\Else{
				$NODE$ = \textsc{Find}($node.left, j$)\;
				\If{$NODE$ $\neq$ NULL}{
					\textbf{return} $NODE$\;
				}
			}
		}
		\textbf{return} NULL\;
	\end{algorithm}
	
	\noindent \textit{Description of Algorithm \ref{Find}:} \textsc{Find}() takes $node$ and $j$ as the inputs. It finds the \textit{feasible node} via reverse in-order traversal in the balanced binary search tree where the delivery $j$ can be assigned. It returns the pointer of a node if the node is found in the tree rooted at $node$ is feasible for the delivery $j$ or NULL if no feasible node is found.
	
	If the tree is empty or the $node$ is NULL, the algorithm returns NULL. 
	Otherwise, the algorithm executes in three parts. It may return a non-null or null pointer at any of these parts.
	\begin{itemize}
		\item {\textsc{Find}($node.right, j$)}
		At first, algorithm finding the feasible node in the right subtree of $node$ (line $4$) by calling \textsc{Find}($node.right, j$). If it returns a non-null pointer, the algorithm returns that pointer (line $5-6$). Else, algorithm proceeds for the next part for checking whether the \textit{node} itself is feasible for the delivery $j$ by calling \textsc{Check}($node, j$).
		\item {\textsc{Check}($node, j$)} 
		If \textsl{Check} returns TRUE, then algorithm returns the pointer of the \textit{node} itself (line $7-8)$. If \textsc{Check} returns FALSE and $d_j = 0$ i.e., remaining battery capacity  of $node$ (\textit{node.key}) is lesser than the cost of the delivery $j$ $(c_j)$ (follows from line $3-5$ of \textsc{Check}), algorithm returns NULL (line $9-10$). Because all the other nodes reachable via reverse inorder traversal  have less remaining capacity, none of those nodes can be feasible for the delivery $j$. Here note that, checking for the delivery $j$ returns FALSE at most  $N(j)$ ($= d_j - 1$) times, as delivery $j$ is in conflict with $N(j)$ other intervals. Otherwise (\textsc{Check} returns FALSE and $d_j > 0$), algorithm proceeds for the next part for finding the feasible node in the left subtree of the $node$ by calling \textsc{Find}($root.left, j$).
		\item {\textsc{Find}($root.left, j$)}
		If it returns a non-null pointer, then algorithm returns that pointer (line $12-14$). 
	\end{itemize} 
	Finally, if \textsc{Find} does not return any pointer in the above three parts, it means that all nodes in the left and right subtrees of $node$ including the \textit{node} itself have been checked and found that no node in the tree rooted at $node$ is feasible for the delivery $j$. So, it returns NULL (line $15$).

	\noindent \textit{Description of Algorithm \ref{Check}}: \textsc{Check()} takes $node$, $j$ as inputs, then it returns TRUE or FALSE depending on whether delivery $j$ can be assigned to the $node$. If the $node$ has remaining battery capacity $(node.key)$ greater than or equals to the cost of the delivery $j$  $(c_j)$ and there is no conflict between the existing assignment of the $node$ and the delivery $j$, then delivery $j$ can be assigned to this $node$ and so algorithm returns TRUE. If $node.data < t_j^L$ holds, then $t_k^R < t_j^L$ also holds, $\forall$ delivery $k$ in the existing assignment of the $node$, implies delivery $j$ is compatible with the existing assignment of the $node$. On the other hand, if $node.data \geq t_j^L$ holds, then there exists  a delivery $k$ in the existing assignment of the $node$ such that $t_j^L \leq t_k^R$ holds. Also, as the deliveries are assigning according to their non-decreasing launching time, $t_k^L \leq t_j^L$ holds, implies the delivery $k$ is  in  conflict with the existing assignment of the $node$. Thus, the compatibility condition can be checked by $node.data < t_j^L$ (line $1$).
	
	If $node.key < c_j$, no further checking is required, as all the other nodes after the $node$ (in reverse inorder) have lesser or equals to remaining capacity than $c_j$. So, all these nodes are not feasible for the delivery $j$. For this case, \textsc{Check} returns FALSE and set $d_j = 0$, as no further checking is required for the delivery $j$.
	If $node.data \geq t_j^L$, then checking to be done at the predecessor of the $node$ and decrease the maximum checking number for the delivery $j (d_j)$ by one (line $7$) and the algorithm returns FALSE.
	
	Hence, algorithm \ref{Check} confirms that the assignment for each node is feasible, and so is the assignment of each used drone.
	
	\begin{algorithm}[]
		\caption{\textsc{Check}($node, j$)}\label{Check}
		\DontPrintSemicolon
		\uIf{$node.key \geq c_j$ and $node.data < t_j^L$}{
			\textbf{return} TRUE \;
		}
		\ElseIf{$node.key < c_j$}{
			$d_j \gets 0$\;
			\textbf{return} FALSE\;
		}
		\Else{
			$d_j \gets d_j - 1$\;
			\textbf{return} FALSE\;
		}
	\end{algorithm}
	
	\begin{lemma}
		\label{lemma1_factor}
		
		If $m$ is number of used (opened) drones returned by the algorithm \ref{GreedyAlgo} then, at least $(m - \Delta - 1)$ drones have used energy cost at least $\frac{B}{2}$, where $\Delta$ is the maximum degree of the interval graph $G$ constructed from delivery intervals.
		
	\end{lemma}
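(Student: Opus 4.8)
The plan is to call a used drone \emph{light} if the total energy cost of its final assignment is strictly less than $B/2$ (equivalently, its remaining capacity exceeds $B/2$ at termination), and to prove that Algorithm~\ref{GreedyAlgo} opens at most $\Delta+1$ light drones. The claim then follows at once, since every one of the remaining $m-(\Delta+1)$ drones has used cost at least $B/2$.

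First I would record two easy monotonicity facts about the execution. Because deliveries are only ever added to a drone and never removed, the used cost of any drone is non-decreasing over the course of the algorithm; hence a light drone has remaining capacity exceeding $B/2$ at \emph{every} instant after it is opened, not merely at the end. Moreover, if $D$ is light and $a$ is the opening delivery assigned to it, then $c_a \le (\text{final used cost of } D) < B/2$, so in particular $c_a < B/2$.

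The key step is to list the light drones in the order they are opened as $D_1,\dots,D_t$, with respective opening deliveries $a_1,\dots,a_t$, and to argue that $a_t$ conflicts with some delivery already lying on each earlier light drone $D_k$ ($k<t$). When $a_t$ is processed, a new drone is opened only because \textsc{Find} reported that no used node is feasible; in particular $D_k$ was not feasible for $a_t$. But at that moment $D_k$ is open and light, so its remaining capacity exceeds $B/2 > c_{a_t}$; thus the capacity conjunct of \textsc{Check} (line~$1$) is satisfied for $D_k$, and the infeasibility must come from the compatibility conjunct failing, i.e.\ $a_t$ conflicts with a delivery already assigned to $D_k$. Here one also has to verify that the reverse-inorder pruning of \textsc{Find} does not skip $D_k$: since the traversal visits nodes in decreasing order of remaining capacity and $D_k$'s remaining capacity exceeds $c_{a_t}$, the node $D_k$ is reached before any node that could trigger the capacity-based early return (the branch with $d_j=0$), so \textsc{Check}$(D_k,a_t)$ is genuinely evaluated and returns FALSE.

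Finally, the $t-1$ conflicting deliveries produced above are pairwise distinct, as they lie on the $t-1$ distinct drones $D_1,\dots,D_{t-1}$ whose assignments are disjoint, and each of them conflicts with $a_t$. Hence $a_t$ has at least $t-1$ neighbours in the interval graph $G$, giving $t-1 \le \deg_G(a_t) \le \Delta$, that is $t \le \Delta+1$ (the cases $t\le 1$ being trivial since $\Delta\ge 0$). I expect the main obstacle to be the third paragraph: certifying that the failure to place $a_t$ on an earlier light drone is caused by a \emph{conflict} rather than by insufficient capacity. This rests on combining the monotonicity of used cost with a careful reading of how the capacity-based pruning in \textsc{Find} interacts with the key ordering of the search tree, so that one is sure \textsc{Check} was actually invoked on $D_k$.
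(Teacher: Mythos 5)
Your proof is correct and follows essentially the same argument as the paper's: both examine the moment the last-opened ``light'' drone (final used cost $< B/2$) is created, use monotonicity of used cost together with the fact that its opening delivery costs $< B/2$ to rule out capacity failures on the earlier light drones, and then bound the number of forced conflicts by $\Delta$ to conclude there are at most $\Delta+1$ light drones. The differences are presentational only --- you argue directly while the paper argues by contradiction --- with your added verification that the reverse-inorder pruning in \textsc{Find} cannot skip an earlier light drone being a nice explicit treatment of a detail the paper leaves implicit.
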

	\begin{proof}
		We prove the lemma by contradiction.
		
		Let at most $(m - \Delta - 2)$ used drones have used energy cost at least $\frac{B}{2}$. Then, at least $(\Delta + 2)$ drones have used energy cost < $\frac{B}{2}$, without loss of generality let $\{1, 2, \cdots, \Delta +2\}$ be those drones.
		
		Now,  consider the time ($t$), when the drone $(\Delta + 2)$ was introduced by the algorithm. Let for the delivery $j$ it was introduced. Then either $S_i^t \cap I_j \neq \phi$ or $W_i^t + c_j > B$ ($1 \leq i \leq (\Delta + 1)$) holds, where $S_i^t$ is the existing assignment and $W_i^t$ is the total cost for the existing assigned deliveries of the drone $i$ at time $t$ of the algorithm.  $I_j$ can conflict with at most $\Delta$ many $S_i^t$s, as $\Delta$ is the maximum degree of $G$, i.e., $\Delta$ is the maximum conflict number of any interval in $I$. Then, we can find an $i'$ $(1 \leq i' \leq \Delta + 1)$ for which $W_{i'}^t + c_j > B$ holds. But, at the end of the algorithm (say at time $t_e$), $W_{i'}^{t_e} < \frac{B}{2}$ and $W_{\Delta+2}^{t_e} < \frac{B}{2}$, implies $c_j < \frac{B}{2}$, implies $W_{i'}^{t_e} + c_j < B$, implies the drone $i'$ is feasible the delivery $j$, which is a contradiction.
		Hence, statement of the lemma follows.  
	\end{proof}
	\begin{lemma}
		\label{lemma1_compl}
		The time complexity for algorithm \ref{GreedyAlgo} is $\mathcal{O}(n\log n + n_e)$, where $n$ is the number of deliveries and $n_e$ is the number of edges in $G$.
	\end{lemma}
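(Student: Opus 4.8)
The plan is to split the running time into the one-time sort and the cost of the main loop in Algorithm~\ref{GreedyAlgo}, and to charge the loop work against two quantities whose totals we already control: the per-delivery conflict count $N(j)$ (which sums to $2n_e$) and a logarithmic term per delivery (which sums to $n\log n$). Sorting the intervals by launching time costs $\mathcal{O}(n\log n)$. Within the loop, each of the $n$ iterations performs one call to \textsc{Find} followed by either one \textsc{Update} or one \textsc{Insert}. Since $T$ is a balanced binary search tree on at most $m\le n$ nodes, \textsc{Insert} costs $\mathcal{O}(\log n)$, and \textsc{Update} costs $\mathcal{O}(\log n)$ as well, because it is realized by a single deletion followed by a re-insertion. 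Hence the \textsc{Update}/\textsc{Insert} work totals $\mathcal{O}(n\log n)$, and it remains to bound the total cost of the \textsc{Find} calls.

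The crux is to show that \textsc{Find}$(root,j)$ touches only $\mathcal{O}(N(j)+\log n)$ nodes. I would first bound the number of \textsc{Check} calls. Because \textsc{Find} is a reverse in-order traversal, it visits nodes in strictly decreasing order of \emph{key} (remaining capacity). A \textsc{Check} returns FALSE for one of two reasons: a conflict ($node.data \ge t_j^L$) or insufficient capacity ($node.key < c_j$). In the conflict case, the argument in the description of Algorithm~\ref{Check} shows the visited node's assignment must contain a delivery $k$ with $I_k\cap I_j\neq\phi$; since the assignments of distinct drones are pairwise disjoint, distinct conflict-FALSE nodes witness distinct conflicting deliveries, so there are at most $N(j)$ of them. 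At most one \textsc{Check} returns TRUE, after which \textsc{Find} terminates.

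It remains to control the capacity-FALSE encounters and the raw traversal overhead. Since keys decrease along the reverse in-order order, every node with $node.key<c_j$ appears \emph{after} all the conflict-FALSE nodes; the first such node sets $d_j=0$, and from then on the guard $d_j=0$ in Algorithm~\ref{Find} prunes every left-subtree recursion, so the traversal only walks back up the ancestor path, contributing $\mathcal{O}(\log n)$ further \textsc{Check} calls. Adding the $\mathcal{O}(\log n)$ descent needed to reach the largest key and the NULL-pointer recursion leaves (at most twice the number of real-node visits in a binary tree), the total work of \textsc{Find}$(root,j)$ is $\mathcal{O}(N(j)+\log n)$.

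Finally I would sum over all deliveries. By the handshake lemma on the interval graph $G$, $\sum_{j=1}^{n} N(j)=2n_e$, while $\sum_{j=1}^n \mathcal{O}(\log n)=\mathcal{O}(n\log n)$. Together with the sort and the \textsc{Update}/\textsc{Insert} work, this yields $\mathcal{O}(n\log n+n_e)$. I expect the main obstacle to be the traversal-overhead step: making rigorous that the reverse in-order traversal with the $d_j=0$ pruning never wanders into already-discarded low-capacity subtrees, so that the per-delivery overhead beyond the $N(j)$ genuine conflict checks is only $\mathcal{O}(\log n)$ rather than $\mathcal{O}(m)$. The distinct-conflict accounting that gives the $N(j)$ bound is the other point that must be stated carefully, but it follows directly from the disjointness of the drones' assignments.
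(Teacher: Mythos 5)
Your proposal is correct and follows essentially the same route as the paper's proof: sort in $\mathcal{O}(n\log n)$, bound each \textsc{Find} call by $\mathcal{O}(N(j)+\log n)$ plus an $\mathcal{O}(\log n)$ \textsc{Insert}/\textsc{Update}, and sum using $\sum_j N(j)=2n_e$. The only difference is that you rigorously justify the per-delivery \textsc{Find} bound (the disjoint-assignment argument for the conflict-FALSE count and the $d_j=0$ pruning/unwinding analysis), which the paper simply asserts.
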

	\begin{proof}
		At first, algorithm sorts the intervals according to their launch time, takes $\mathcal{O}(n \log n)$ time.
		
		Consider the time of execution, when the algorithm wants to assign the delivery $j$. At this point, number of drones is opened is at most $(j - 1)$, so as number of nodes in the tree. Tree is being balanced binary search tree, at this time height of the tree is $\mathcal{O}(\log j$). For finding the appropriate node in the tree, which is feasible for the delivery $j$, algorithm checks at most $d_j = (N(j) + 1)$ maximum nodes (according to their key) in the tree by calling the function \textsl{Check}(). We can find all these $d_j$ nodes in $\mathcal{O}(\log j) + \mathcal{O}(N(j))$ steps by recursively calling \textsl{Find}() from the $root$. For each \textsl{Check}() algorithm needs $\mathcal{O}(1)$ time. So, total time complexity for \textsl{Find}() for the delivery $j$ is $\mathcal{O}(\log j) + \mathcal{O}(N(j))$. After this, algorithm needs either an \textsl{Insert}() or an \textsl{Update}() operation to the tree, which takes $\mathcal{O}(\log j)$ time.
		
		So, for a delivery $j$, finding, checking, assigning and inserting or updating, takes $\mathcal{O}(\log j) + \mathcal{O}(N(j))$ time. Therefore, for assigning all the deliveries, algorithm needs $\sum_{j=1}^n \mathcal{O}(\log j) + \mathcal{O}(N(j)) = \mathcal{O}(n\log n) + \mathcal{O}(n_e)$, where $n_e = \frac{1}{2} \sum_{j=1}^n N(j)$ is the number of edges in $G$.
		
		Thus, the total running time for the algorithm is $\mathcal{O}(n\log n + n_e)$.  
	\end{proof}
	\begin{theorem}
		\label{Theorem-2}
		Algorithm \ref{GreedyAlgo} is an approximation algorithm for DDP, uses at most $2 OPT + (\Delta + 1)$ drones, where $n$ is the number of deliveries, OPT is the optimum number of drones required for DDP,  and $\Delta$ is the maximum degree of the interval graph $G$.
	\end{theorem}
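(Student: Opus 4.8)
The plan is to combine the structural guarantee of Lemma \ref{lemma1_factor} with a standard volume (capacity) lower bound on the optimum. First I would observe that Algorithm \ref{GreedyAlgo} always returns a \emph{feasible} solution to DDP: each interval $I_j$ is placed into exactly one assignment $S_i$, and the \textsc{Check} routine guarantees that every assignment respects both the compatibility condition (via the test $node.data < t_j^L$) and the battery budget (via $node.key \geq c_j$). Hence the output of $m$ drones is a legitimate DDP solution, so Algorithm \ref{GreedyAlgo} is genuinely an algorithm for DDP; it remains only to bound $m$ against OPT.

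Next I would establish the lower bound on OPT. Since delivery $j$ contributes cost $c_j$ and each drone has budget $B$, any feasible solution—in particular an optimal one using OPT drones—satisfies $\sum_{j \in \mathcal{N}} c_j \leq OPT \cdot B$, because the total cost distributed over the used drones cannot exceed their number times $B$. This yields $OPT \geq \frac{1}{B}\sum_{j \in \mathcal{N}} c_j$.

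Then I would invoke Lemma \ref{lemma1_factor}. Writing $W_i$ for the total cost assigned to drone $i$ by the algorithm, the fact that every delivery lands on exactly one drone gives $\sum_{i=1}^{m} W_i = \sum_{j \in \mathcal{N}} c_j$. By the lemma, at least $(m - \Delta - 1)$ of these drones have $W_i \geq \frac{B}{2}$, so $\sum_{j \in \mathcal{N}} c_j = \sum_{i=1}^{m} W_i \geq (m - \Delta - 1)\frac{B}{2}$. Combining this with the volume bound gives $OPT \geq \frac{1}{B}\cdot(m - \Delta - 1)\frac{B}{2} = \frac{m - \Delta - 1}{2}$, which rearranges to $m \leq 2\,OPT + (\Delta + 1)$, as claimed. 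The running-time statement then follows directly from Lemma \ref{lemma1_compl}.

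The argument is essentially routine once Lemma \ref{lemma1_factor} is available, so I do not anticipate a serious obstacle. The one point meriting care is keeping the two inequalities pointed correctly: the volume bound lower-bounds OPT, while Lemma \ref{lemma1_factor} lower-bounds the aggregate cost $\sum_j c_j$ in terms of $m$, and the $(\Delta+1)$ possibly under-filled drones are precisely what obstructs a clean factor-$2$ bound and must be carried additively through to the final estimate.
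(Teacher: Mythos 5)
Your proposal is correct and follows essentially the same route as the paper's own proof: feasibility of the greedy output, the volume bound $\sum_j c_j \leq OPT \cdot B$, and Lemma \ref{lemma1_factor} combined into the chain $(m-\Delta-1)\frac{B}{2} \leq \sum_j c_j \leq OPT \cdot B$, with the running time cited from Lemma \ref{lemma1_compl}. No substantive difference to report.
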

	\begin{proof}
		At the beginning, the algorithm sorts the deliveries based on their non-decreasing launch time. Without loss of generality let $t_1^L \leq t_2^L \cdots, \leq t_n^L$. Then, it assigns all deliveries in accordance with the previously sorted order. First, algorithm takes the delivery with shortest launching time and assign to a drone (indexed by $1$). We call this drone as a used drone. Since cost for each delivery is $\leq B$, this is an feasible assignment. Thereafter, for assigning the delivery $I_j$ $(2 \leq j \leq n)$ as per their sorted order, algorithm checks whether there exists a drone $i$ among all the used drones (say $m$), which is feasible for this delivery. If yes, assign the delivery to the drone $i$. Otherwise, a new drone is introduced with index ($m + 1$) and assign the delivery $I_j$ there. The feasibility of each existing assignment is always upheld for each such addition of the delivery. Thus, at the end of the algorithm \ref{GreedyAlgo}, all the returned assignments are feasible.  
		
		The polynomial running time of the algorithm follows from lemma \ref{lemma1_compl}.
		
		If $m$ is the number of drones (used drones) returned by the algorithm, then from the lemma \ref{lemma1_factor} following holds.
		\begin{alignat}{2}
			&(m - \Delta -1).\frac{B}{2} \leq \sum_{j=1}^n c_j \leq OPT.B\\
			&\Rightarrow m \leq 2.OPT + (\Delta + 1)
		\end{alignat}
		Hence the proof.  
	\end{proof}
	
	\subsection{Approximation Algorithm for DDP using Colouring}
	
	In this section, we demonstrate another approximation algorithm (Algorithm \ref{Algo-2}) using colouring to solve DDP. The algorithm in general gives a  better solution than the previous one. 
	
	
	From the given interval set $I$, we can construct an interval graph $G$ as described in section \ref{section:model}. 
	Interval graph being a perfect graph, vertices of $G$ can be coloured by $\omega$ many colours, where $\omega$ is the maximum clique size of $G$.  From the $\omega$-colouring of $G$, we can partition the set $I$ into $\omega$ many sets, each corresponding to the same coloured vertices of $G$. 
	Let $\{J_1, J_2, \cdots, J_{\omega}\}$ be the  partition set, where $J_k \subseteq I$ is a compatible set associated to colour $k$ $(1 \leq k \leq \omega)$.
	
	For each $k$ $(1 \leq k \leq \omega)$, we can find the number of drones needed to schedule all the intervals in $J_k$ and the corresponding schedules using algorithm \ref{GreedyAlgo_mod}, a modified version of algorithm \ref{GreedyAlgo}. Similar to the algorithm \ref{GreedyAlgo}, for each interval set $J_k$, we construct a balanced binary search tree $T_k$ $(1 \leq k \leq \omega)$. 
	
	The pseudocode of the modified version of algorithm \ref{GreedyAlgo} is depicted in algorithm \ref{GreedyAlgo_mod}.
	Following changes are made on algorithm \ref{GreedyAlgo} to modify it to algorithm \ref{GreedyAlgo_mod}. 
	As $J_k$ is compatible, then the $data$ from each tree node can be omitted. All the other attributes ($node.key, node.index, node.left, node.right$) of a node in each tree remains same. For a fixed interval set $J_k$ $(1 \leq k \leq \omega)$, the index of $i-th$ used drone is denoted by $i_k$ and the corresponding assignment is denoted by $S_{i_k}$. For assigning a delivery $I_j$ in $J_k$, algorithm \ref{GreedyAlgo_mod} calls \textsc{FindModified}($root, c_j$). If \textsc{FindModified}($root, c_j$) returns NULL, a new drone is introduced with index $(m+1)_k$ (initially, $m = 0$) and assign the delivery $I_j$ for the drone. Then the algorithm calls \textsc{InsertModified}() for inserting a new node with index $(m+1)_k$ and key as $(B - c_j)$. If \textsc{FindModified}($root, c_j$) returns a non-null pointer of a $node$ then the algorithm assigns the delivery $I_j$ at the $node$ and calls \textsc{UpdateModified}() for decreasing the $node.key$ by $c_j$. Whereas \textsc{FindModified}() finds the $node$ with maximum $key$ in the tree. If the $node$ with maximum key is not feasible for the delivery $I_j$, then no other node in the tree will be feasible for $I_j$. So, for this case, \textsc{FindModified} returns NULL. Otherwise, \textsc{FindModified} returns the pointer of the $node$.

	\begin{algorithm}[]
		\caption{\textsc{GreedyAlgoModified}($J_k, c, B$)}\label{GreedyAlgo_mod}
		\DontPrintSemicolon
		\textbf{Initialize:} $m = 0;  S_{i_k} = \phi, \forall i: 1\leq i \leq n$; $root$ = NULL\; 
		\For{each interval $I_j$ in $J_k$}{
			$node$ = \textsc{FindModified}($root, c_j$)\;
			\If{$node \neq$ NULL}{
				$i_k \gets$ $node.index$; $S_{i_k} \gets S_{i_k} \cup \{I_j\}$\;
				\textsc{UpdateModified}($root$, $node$, $node.key$ - $c_j$)\;
			}
			\Else{
				$m \gets m+1$\;
				$S_{m_k} \gets \{I_j\}$\;
				\textsc{InsertModified}($root$, $m_k, B-  c_j$)\;
			}
		}
		\textbf{return} $m_k$,  with their assignments $S_{i_k}$\;
	\end{algorithm}
	
	\begin{algorithm}[]
		\caption{\textsc{FindModified}($node, c_j$)}\label{Find_mod}
		\DontPrintSemicolon
		\uIf{$node$ = NULL}{
			\textbf{return} NULL \;
		}
		\Else{
			\uIf{$node.right = NULL$ and $node.key \geq c_j$}{
				\textbf{return} $node$\;
			}
			\ElseIf{$node.right = NULL$ and $node.key < c_j$}{
				\textbf{return} NULL\;
			}
			\Else{
				\textbf{return} \textsc{FindModified}($node.right, c_j$)\;
			}
		}
	\end{algorithm}
	Let $m_k$ be the number of drones returned by the algorithm \ref{GreedyAlgo_mod} for the delivery interval set $J_k$. Then, the following lemma \ref{lemma-color} is true.
	\begin{lemma}
		\label{lemma-color}
		$\mathcal{W}(J_k) >  \left(\frac{m_k - 1}{2}\right). B$, where $m_k$ is the number of drones returned by algorithm \ref{GreedyAlgo_mod} for the interval set $J_k \subseteq I$ and $\mathcal{W}(J_k) = \sum_{I_j \in J_k} (c_j)$.
	\end{lemma}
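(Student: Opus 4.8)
The plan is to recognize that on a compatible interval set $J_k$, Algorithm \ref{GreedyAlgo_mod} degenerates into a First-Fit style bin-packing procedure in which the battery budget $B$ is the only active constraint, and then to invoke the classical ``no two bins are jointly under-full'' argument. Since every pair of intervals in $J_k$ is conflict-free, the compatibility test can never fail, so \textsc{FindModified} opens a new drone for a delivery $I_j$ precisely when $I_j$ does not fit, by cost, into the existing drone of largest remaining capacity, and hence into no existing drone at all.

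First I would index the opened drones $1, 2, \ldots, m_k$ in their order of creation and write $W_i$ for the final used cost of drone $i$, so that $\mathcal{W}(J_k) = \sum_{i=1}^{m_k} W_i$. The key step is the pairwise claim: for any two drones $a < b$ we have $W_a + W_b > B$. To see this, consider the moment drone $b$ is opened, say for delivery $I_j$. At that instant drone $a$ already exists and could not accommodate $I_j$, which (compatibility being automatic) forces $W_a^{t} + c_j > B$, where $W_a^{t}$ is the load of $a$ at that time. Since loads never decrease, $W_a \geq W_a^{t}$, and since $c_j$ is among the deliveries eventually assigned to $b$, $W_b \geq c_j$; adding the two gives $W_a + W_b > B$.

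Next I would sum this inequality over the $m_k - 1$ consecutive pairs $(i, i+1)$:
$$\sum_{i=1}^{m_k - 1} (W_i + W_{i+1}) > (m_k - 1)\, B.$$
The left-hand side equals $2\,\mathcal{W}(J_k) - W_1 - W_{m_k} \leq 2\,\mathcal{W}(J_k)$, so $2\,\mathcal{W}(J_k) > (m_k - 1)B$, that is, $\mathcal{W}(J_k) > \left(\frac{m_k - 1}{2}\right) B$. The boundary case $m_k = 1$ is immediate, since then the right-hand side is $0$ while $\mathcal{W}(J_k) = W_1 > 0$.

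The step I expect to require the most care is the justification that opening a new drone really certifies infeasibility in \emph{every} existing drone, not merely in the one probed by \textsc{FindModified}. This rests on the invariant that \textsc{FindModified} inspects the node of maximum remaining capacity together with the fact that $J_k$ is compatible, so that budget is the sole reason a delivery can be rejected; once this is pinned down, the pairwise inequality and the summation are routine.
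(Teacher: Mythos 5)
Your proof is correct and takes essentially the same route as the paper: both rest on the First-Fit observation that when a new drone is opened, the triggering delivery fits in no existing drone (budget being the only active constraint on the compatible set $J_k$), so the loads of two opened drones always sum to more than $B$. The only cosmetic difference is that you sum this inequality over the $m_k - 1$ overlapping consecutive pairs $(i, i+1)$, whereas the paper sums over the $\lfloor m_k/2 \rfloor$ disjoint pairs $(2i-1, 2i)$; both give $\mathcal{W}(J_k) > \left(\frac{m_k - 1}{2}\right)B$, and your explicit justification of the pairwise claim (including the $m_k = 1$ boundary case) is if anything more careful than the paper's one-line "otherwise they could be merged" argument.
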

	\begin{proof}
		Let $m_k$ drones are denoted by $\{1, 2, \cdots, m_k\}$ and $S_i$ = \{set of intervals in $J_k$ which are assigned to the drone $i$ by the algorithm \ref{GreedyAlgo_mod}\} $(1 \leq i \leq m_k)$.\\
		Then, $\mathcal{W}(S_{2i - 1}) + \mathcal{W}(S_{2i}) > B$ for $(1 \leq i \leq \lfloor \frac{m_k}{2} \rfloor)$, otherwise all the intervals in $S_{2i-1}$ and $S_{2i}$ can be assigned to a single drone. Thus,
		\begin{alignat}{2}
			&\sum_{i=1}^{\lfloor \frac{m_k}{2} \rfloor} (\mathcal{W}(S_{2i - 1}) + \mathcal{W}(S_{2i})) > \left\lfloor \frac{m_k}{2} \right\rfloor . B \\
			&\Rightarrow \mathcal{W}(J_k) > \left\lfloor \frac{m_k}{2} \right\rfloor . B\\
			&\Rightarrow \mathcal{W}(J_k) >  \left(\frac{m_k - 1}{2}\right). B
		\end{alignat}  
	\end{proof}
	
	
	\begin{algorithm}[]
		\caption{\textsc{ApprxAlgoWithColouring}}\label{Algo-2}
		\DontPrintSemicolon
		\textbf{Input:} Set of delivery intervals $I = \{I_1, I_2, \cdots, I_n\}$; cost $c_j$ for each $I_j \in I$; drone battery budget $B$\;
		Construct an interval graph $G$ from the delivery time interval set $I$\;
		Find maximum clique size $(\omega)$ of $G$\;
		Colour all the vertices of $G$ with the colors $\{1, 2, \cdots, \omega\}$ such that no two adjacent vertices gets same colour\;
		Find $J_k$ = \{Set of intervals in $G$ whose corresponding vertices in $G$ are coloured by the colour $k$\} $(1 \leq k \leq \omega)$\;
		For each $J_k$ $(1 \leq k \leq \omega)$, find number of drones, say $m_k$ and corresponding assignments, say $\mathcal{S}_k$ by using the algorithm \ref{GreedyAlgo_mod}\;
		Return $\sum_{k=1}^{\omega} m_k$ along with their corresponding assignments\;
	\end{algorithm}
	
	\begin{lemma}
		\label{lemma2_compl}
		The time complexity for Algorithm \ref{Algo-2} is $\mathcal{O}(n\log n + n_e)$, where $n$ is the number of deliveries and $n_e$ is the number of edges in $G$.
	\end{lemma}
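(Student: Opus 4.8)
The plan is to bound the running time of Algorithm \ref{Algo-2} step by step and then combine the bounds. It is natural to split the algorithm into two phases: a preprocessing phase (constructing $G$, computing $\omega$, producing an optimal colouring, and extracting the partition $\{J_1,\dots,J_\omega\}$) and an assignment phase (invoking Algorithm \ref{GreedyAlgo_mod} once on each colour class $J_k$). I would analyse the two phases separately and show that each stays within $\mathcal{O}(n\log n + n_e)$.

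For the preprocessing phase, constructing the interval graph $G$ amounts to sorting the $2n$ interval endpoints in $\mathcal{O}(n\log n)$ time and then reporting every conflicting pair by a sweep over the sorted endpoints, which outputs all $n_e$ edges in $\mathcal{O}(n + n_e)$ time; hence this step costs $\mathcal{O}(n\log n + n_e)$. Since $G$ is an interval graph, and therefore chordal and perfect, its maximum clique size $\omega$ and an optimal $\omega$-colouring can each be computed in time linear in the number of vertices and edges, i.e. $\mathcal{O}(n + n_e)$, by the colouring result for chordal graphs cited in Section \ref{section:model} (\cite{chordal}). Bucketing the coloured vertices into the classes $J_1,\dots,J_\omega$ is a single pass over the vertices and takes $\mathcal{O}(n)$. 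Thus the whole preprocessing phase is $\mathcal{O}(n\log n + n_e)$.

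The heart of the argument is the assignment phase. Here I would exploit the fact that each $J_k$ is compatible, so Algorithm \ref{GreedyAlgo_mod} never performs a conflict scan: \textsc{FindModified} merely descends to the rightmost (maximum-key) node of the balanced tree $T_k$ and tests a single feasibility condition, costing $\mathcal{O}(\log |J_k|)$, while each ensuing \textsc{InsertModified} or \textsc{UpdateModified} also costs $\mathcal{O}(\log |J_k|)$. Consequently, processing one interval of $J_k$ takes $\mathcal{O}(\log |J_k|)$ time, and processing all of $J_k$ takes $\mathcal{O}(|J_k|\log |J_k|)$. This is precisely where the analysis departs from Lemma \ref{lemma1_compl}: because there are no conflicts within a colour class, the $\mathcal{O}(N(j))$ term that produced the $n_e$ contribution in the original greedy algorithm disappears entirely.

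It then remains to sum over all colour classes. Since $J_1,\dots,J_\omega$ partition $I$, we have $\sum_{k=1}^{\omega}|J_k| = n$ and $|J_k| \leq n$ for every $k$, so
\begin{equation}
\sum_{k=1}^{\omega} |J_k|\log |J_k| \;\leq\; \log n \sum_{k=1}^{\omega} |J_k| \;=\; n\log n ,
\end{equation}
which makes the assignment phase $\mathcal{O}(n\log n)$; adding the preprocessing bound yields the claimed $\mathcal{O}(n\log n + n_e)$. The step I expect to be the main obstacle is making this telescoping over the $\omega$ colour classes rigorous, i.e. convincing the reader that running the greedy subroutine $\omega$ separate times does not introduce an extra factor of $\omega$. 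The partition identity $\sum_{k} |J_k| = n$, combined with the logarithmic per-element cost established above, is exactly what prevents such a blow-up.
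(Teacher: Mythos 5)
Your proposal is correct and follows essentially the same route as the paper's proof: construct $G$, compute $\omega$ and the colouring via the chordal-graph result, extract the partition, bound each call to Algorithm \ref{GreedyAlgo_mod} on $J_k$ by $\mathcal{O}(|J_k|\log|J_k|)$, and sum over the colour classes using $\sum_k |J_k| = n$ to get $\mathcal{O}(n\log n)$ plus the $\mathcal{O}(n + n_e)$ preprocessing. If anything, your write-up is slightly more careful than the paper's (you make the endpoint-sorting cost of building $G$ and the final summation inequality explicit, where the paper simply asserts them), but the decomposition and key estimates are identical.
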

	\begin{proof}
		We can construct an interval graph $G$ (line 2) from the given interval set $I$ in $\mathcal{O}(n + n_e)$ time. Then, find maximum clique size $\omega$ (line 3) and the colour all the vertices of $G$ (line 4) with these $\omega$ colours can be done in $\mathcal{O}(n
		+ n_e)$ time \cite{chordal}. Finding all the $J_k$ $(1 \leq k \leq \omega)$  (line $5$) takes $\mathcal{O}(n)$ time.
		
		Algorithm \ref{Algo-2} uses algorithm \ref{GreedyAlgo_mod} for finding the number of drones and corresponding assignments for each delivery intervat $J_k$ $(1 \leq k \leq \omega)$ (line $6$). For each interval set $J_k$ $(1 \leq k \leq \omega)$ we create a balanced binary search tree $(T_k)$ similar to the previous section. For assigning a delivery $I_j$ in $J_k$, algorithm \ref{GreedyAlgo_mod} call \textsc{FindModified}($root, c_j)$. This returns a pointer in $\mathcal{O}(h_j^k)$ time, where $h_j^k$ is the height of the tree $T_k$ before the assignment of delivery $I_j$. If \textsc{FindModified}($root, C_j)$ returns NULL, the algorithm calls \textsc{Insert}(). Otherwise, the algorithm  calls \textsc{Update}(). For any of the case, algorithm \ref{GreedyAlgo_mod} needs $\mathcal{O}(h_j^k)$ time for assigning delivery $I_j$ in $J_k$. Thus, for the interval set $J_k$ algorithm \ref{GreedyAlgo_mod} runs in $\sum_{I_j \in J_k} \mathcal{O}(h_j^k) \leq \mathcal{O}(n_k \log n_k)$ time, where $n_k$ is the number of deliveries in $J_k$. So, total running time for line $6$ of algorithm \ref{Algo-2} is $\sum_{k=1}^{\omega} \mathcal{O}(n_k \log n_k) = \mathcal{O}(n \log n)$.
		
		Hence, overall running time for Algorithm \ref{Algo-2} is $\mathcal{O}(n \log n + n_e)$.  
	\end{proof}
	
	\begin{theorem}
		\label{Theorem-3}
		Algorithm \ref{Algo-2} is an approximation algorithm for DDP, which uses at most ($2 OPT + \omega$) drones, where $\omega$ is the maximum clique size of $G$ and OPT is optimum number of drones required for DDP.
	\end{theorem}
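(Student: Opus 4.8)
The plan is to mirror the structure of the proof of Theorem~\ref{Theorem-2}, but to replace the degree-based counting of Lemma~\ref{lemma1_factor} with the per-colour-class bound of Lemma~\ref{lemma-color}; this substitution is exactly what lets the additive overhead shrink from $\Delta+1$ to $\omega$.

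First I would dispose of correctness. Since $G$ is an interval graph, the $\omega$-colouring partitions $I$ into compatible sets $J_1,\dots,J_{\omega}$ with $\bigcup_{k=1}^{\omega} J_k = I$ and $J_k \cap J_{k'} = \phi$ for $k \neq k'$. Within a single $J_k$ every pair of intervals is conflict-free, so \textsc{GreedyAlgoModified} never needs to test compatibility: it only enforces the battery budget, assigning each $I_j$ to the feasible drone of largest remaining capacity (via \textsc{FindModified}) or opening a fresh drone. Hence every returned assignment is feasible. Because the colour classes are disjoint, the assignments produced across all $k$ cover each delivery exactly once and stay pairwise disjoint, so Algorithm~\ref{Algo-2} outputs a valid DDP solution using $\sum_{k=1}^{\omega} m_k$ drones. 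The polynomial running time is supplied by Lemma~\ref{lemma2_compl}.

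Next comes the counting, which is the heart of the bound. For each colour $k$, Lemma~\ref{lemma-color} gives $\mathcal{W}(J_k) > \left(\frac{m_k-1}{2}\right) B$, equivalently $m_k < \frac{2\,\mathcal{W}(J_k)}{B} + 1$. Summing over all $\omega$ classes and using that the $J_k$ partition $I$, so that $\sum_{k=1}^{\omega}\mathcal{W}(J_k) = \sum_{j=1}^{n} c_j$, I would obtain
\[
\sum_{k=1}^{\omega} m_k < \frac{2}{B}\sum_{j=1}^{n} c_j + \omega .
\]
The last ingredient is the same global lower bound on $OPT$ already used in Theorem~\ref{Theorem-2}: in any optimum solution the $OPT$ drones carry total cost at most $OPT\cdot B$ yet must cover every delivery, so $\sum_{j=1}^{n} c_j \leq OPT\cdot B$. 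Substituting yields $\sum_{k=1}^{\omega} m_k < 2\,OPT + \omega$, which gives the claimed bound.

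I do not expect a genuine obstacle, since the two supporting lemmas carry the real load; the only points demanding care are (i) verifying that the colour classes truly partition $I$, so that both the weights $\mathcal{W}(J_k)$ and the covered deliveries add up exactly, and (ii) justifying $\sum_{j} c_j \leq OPT\cdot B$ from feasibility of the optimum. The conceptual reason the bound improves over Theorem~\ref{Theorem-2} is that each colour class, being conflict-free, contributes only a single additive drone of slack, and there are exactly $\omega$ such classes rather than the $\Delta+1$ slack that the conflict structure forces on the un-coloured greedy algorithm.
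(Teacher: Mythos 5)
Your proposal is correct and follows essentially the same route as the paper's proof: both apply Lemma~\ref{lemma-color} to each colour class, sum over the $\omega$ classes using the fact that the $J_k$ partition $I$, and combine with the lower bound $\sum_{j} c_j = \mathcal{W}(I) \leq OPT \cdot B$, with running time from Lemma~\ref{lemma2_compl}. Your correctness argument is slightly more self-contained (arguing directly that each $J_k$ is compatible, so only the budget constraint matters) where the paper appeals to Theorem~\ref{Theorem-2}, but this is a presentational difference, not a different approach.
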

	\begin{proof}
		Algorithm \ref{Algo-2} uses algorithm \ref{GreedyAlgo_mod} as it's subroutine for each delivery interval set $J_k$ ($1 \leq k \leq \omega$). Algorithm \ref{GreedyAlgo_mod} is a modified version of algorithm \ref{GreedyAlgo} and from theorem \ref{Theorem-2}, we know that algorithm \ref{GreedyAlgo} always gives feasible solution. So, $\{J_1, J_2, \cdots, J_{\omega}\}$ being partition of the given delivery interval set $I$, algorithm \ref{Algo-2} gives feasible assignments for each delivery in $I$.
		
		Polynomial running time of the algorithm follows from lemma \ref{lemma2_compl}.
		
		Let $\mathcal{W}(I)$ be the total cost of all the deliveries in $I$ and $\mathcal{W}(J_k)$ be the total cost of all the deliveries in $J_k$, $\forall k: 1 \leq k \leq \omega$. 
		
		Then, $\sum_{k=1}^{\omega} \mathcal{W}(J_k) = \mathcal{W}(I) \leq OPT.B$.
		
		Let $m_k$ be the number of drones returned by the algorithm \ref{GreedyAlgo_mod} for the interval set $J_k$, $\forall k: 1 \leq k \leq \omega$, Thus, from the lemma \ref{lemma-color} following holds.
		\begin{alignat}{2}
			&\sum_{i=1}^{\omega} \left(\frac{m_k - 1}{2}\right). B < \sum_{k=1}^{\omega} \mathcal{W}(J_k) \\
			&\sum_{i=1}^{\omega} \left(\frac{m_k - 1}{2}\right). B < OPT.B \\
			&\Rightarrow \sum_{i=1}^{\omega} m_k < 2.OPT + \omega. \label{eq-14}
		\end{alignat}
		Hence the proof.  
	\end{proof}
	\section{Conclusion}
	\label{section:concl}
	
	In this paper, we studied the drone-delivery packing problem (DDP). We propose two approximation algorithms with identical running time $\mathcal{O}(n\log n + n_e)$, where $n$ is the number of deliveries and $n_e$ is the number of edges in the interval graph $G$, constructed from the delivery time intervals. The first algorithm  (Algorithm \ref{GreedyAlgo}) uses $2OPT + (\Delta + 1$) drones, and the second algorithm (Algorithm \ref{Algo-2})  uses $2OPT + \omega$ drones, where $\omega$ is the maximum clique size of $G$, $\Delta$ is the maximum degree of $G$ and OPT is the optimum number of drones required for the DDP. In general, $\omega \leq (\Delta + 1)$, so, second algorithm gives a better approximation than the first. 
	However, the second algorithm gives $3$-factor approximation as we need at least $\omega$ many drones for scheduling all the deliveries, i.e., $\omega \leq OPT$. 
	
	Finding better constant factor approximation algorithms and asymptotic polynomial time approximation schemes (PTAS) for DDP will be considered for future research. Furthermore, if the drone has a charging area inside the truck, determining the delivery schedule using the fewest possible drones subject to a finite amount of charging time is of great interest.
	\bibliographystyle{ACM-Reference-Format}
	\balance
	\bibliography{arxiv}
	
\end{document}